\documentclass[journal,onecolumn]{IEEEtran}
\usepackage{amsmath,amssymb,amsthm,amsfonts}
\usepackage{mathtools}
\usepackage{bm}
\usepackage{hyperref}
\usepackage{enumitem}
\usepackage{mathrsfs}
\usepackage{orcidlink}

\newtheorem{theorem}{Theorem}

\newtheorem{remark}{Remark}

\newtheorem{proposition}{Proposition}

\ifCLASSINFOpdf
\else
\fi
\begin{document}
%
\title{Superdirectivity as Boundary Concentration under Spectral Collision}
%
%
%

\author{Hong~Yang\orcidlink{0000-0002-3744-7688}
\thanks{H. Yang (retired) was with the Department
of Mathematics and Algorithms Research, Bell Laboratories, Murray Hill,
NJ, 07974 USA e-mail: hyang.bell.labs@gmail.com (https://www.bell-labs.com/about/researcher-profiles/hyang/).}
\thanks{}}

\maketitle

\begin{abstract}
Array superdirectivity is traditionally derived through singular optimization of densely spaced antenna arrays. In this paper, we show that the phenomenon admits a geometric interpretation as a concentration effect induced by spectral collision. As the spacing of an $M$-element linear array tends to zero, the exponential family generated by a linear array undergoes a spectral collision, and the associated finite-dimensional subspaces converge in reproducing kernel to a polynomial jet space. The maximum achievable array gain equals the diagonal evaluation of the reproducing kernel, and is therefore governed by the reciprocal Christoffel function. For the classical flat \(L^2([-1,1])\) geometry, the Christoffel--Darboux kernel exhibits boundary concentration, yielding the quadratic \(M^2\) superdirective law as a direct consequence of kernel asymptotics. This viewpoint separates intrinsic gain limits from numerical conditioning and identifies superdirectivity as a manifestation of a more general concentration mechanism. The framework further shows that the classical \(M^2\) scaling is not universal: alternative spectral geometries produce different concentration laws through their associated Christoffel asymptotics. The results establish a direct connection between superdirectivity, reproducing kernels, orthogonal polynomials, and concentration phenomena arising from singular spectral limits.
\end{abstract}

\begin{IEEEkeywords}
superdirectivity, spectral collision, boundary concentration, reproducing kernel, Hilbert space, Christoffel function, Christoffel--Darboux kernel, orthogonal polynomial, jet space.
\end{IEEEkeywords}

%
\IEEEpeerreviewmaketitle

\section{Introduction}
%
%
%
%

Superdirectivity is a classical phenomenon in array theory in which a densely spaced array achieves directional gains far exceeding the $O(M)$ scaling of conventional $M$-element linear arrays. For a linear array operated in the endfire regime, the maximal directivity is known to scale asymptotically as $M^2$. This remarkable amplification law first appeared in the pioneering work of Uzkov~\cite{Uzkov1946}, who derived the quadratic law via quadratic form eigenvalue optimization of array gain. Schelkunoff~\cite{Schelkunoff1943} had earlier introduced the fundamental polynomial representation of linear arrays, mapping the array factor to a polynomial evaluated on the unit circle, but did not pursue the dense-spacing limit. Subsequent contributions include Parsons' proof~\cite{Parsons1987} that the linear endfire array is optimal among all three-dimensional configurations with vanishingly small separations. These classical derivations rely either on precise cancellation from asymptotic analysis of array factors, near-singular optimization, or on physical arguments based on derivative fields and strongly oscillatory current distributions. Despite these developments, the mathematical mechanism responsible for the quadratic scaling remains somewhat hidden behind the singular nature of the dense-array limit.

The purpose of this paper is to show that superdirectivity can be understood as a concentration phenomenon generated by a singular spectral limit. As the spacing of a linear array tends to zero, the exponential steering functions that describe the array manifold lose spectral separation and collapse onto a finite-dimensional polynomial jet space. We refer to this degeneration as a spectral collision. The resulting limit converts the superdirective gain problem into an extremal problem in a finite-dimensional reproducing-kernel Hilbert space (RKHS). Within this framework, the maximum achievable gain is represented by the diagonal of the reproducing kernel. The dense-array limit reduces the study of superdirectivity to the asymptotic behavior of reproducing kernels associated with polynomial spaces. For the classical flat \(L^2([-1,1])\) geometry, the limiting kernel is the Christoffel--Darboux kernel of the Legendre polynomials. The familiar \(M^2\) endfire law then follows directly from its endpoint asymptotics. This perspective reveals that superdirectivity is fundamentally a boundary concentration phenomenon. The reciprocal Christoffel function decays at a faster rate at the endpoints of the interval than in the interior, producing a concentration of kernel mass at the boundary. A trace identity fixes the total integrated gain, so the large endpoint amplification must be accompanied by a corresponding depletion of gain elsewhere. In this sense, superdirectivity is not the creation of additional gain, but rather a redistribution of a fixed gain budget induced by spectral concentration. The reproducing-kernel viewpoint also clarifies the role of geometry. The quadratic \(M^2\) law is not a universal consequence of dense spacing itself; rather, it reflects the particular spectral geometry associated with the flat Lebesgue measure on \([-1,1]\). When the underlying measure is changed, the corresponding Christoffel asymptotics yield different concentration laws. Thus superdirectivity appears as one instance of a broader class of concentration phenomena arising from singular spectral limits.

The analysis combines finite-dimensional RKHS theory, confluent limits of exponential families, and classical asymptotics of orthogonal polynomials. The resulting framework provides a geometric interpretation of superdirectivity, connects array gain to Christoffel functions and reproducing kernels, and identifies spectral collision as the mechanism underlying the classical endfire amplification law.

The remainder of the paper is organized as follows. Section \ref{s:pf} formulates the array-gain problem and introduces the associated exponential subspaces. Section \ref{s:ag} identifies array gain with diagonal evaluation of the reproducing kernel. Section \ref{s:scl} establishes the spectral-collision limit and the emergence of the polynomial jet space. Section \ref{s:super} derives superdirectivity from the limiting Christoffel--Darboux kernel. Section \ref{s:gsc} develops the concentration interpretation through Christoffel functions, trace identities, and alternative spectral geometries. Concluding remarks are given in Section \ref{s:concl}.

\section{Problem Formulation}\label{s:pf}
The ambient Hilbert space throughout this paper is  
\[
L^2([-1,1])
=
\left\{
f:[-1,1]\to\mathbb C
\;\middle|\;
\int_{-1}^1 |f(u)|^2\,du < \infty
\right\},
\]
the space of square-integrable complex-valued functions on the interval $[-1,1]$ with the inner product
\[
\langle f,g\rangle
=
\int_{-1}^1 f(u)\overline{g(u)}\,du,
\]
and induced norm
\[
\|f\|_{L^2}
=
\left(
\int_{-1}^1 |f(u)|^2\,du
\right)^{1/2}.
\]

\subsection{Linear Array}
Consider an $M$-element linear array with element locations
\[
x_n=d\alpha_n, \quad n=0,\cdots, M-1
\]
where \(d>0\) is a global scaling parameter for spacing and \(\alpha_n=\frac{x_n}{d}\in\mathbb R\) are fixed normalized dimensionless coordinates that encode the relative array geometry, and
\[
\delta = kd = \frac{2\pi}{\lambda}d
\]
is the normalized spacing parameter, with \(k=2\pi/\lambda\) the free-space wavenumber.

We define a finite-dimensional subspace 
\[
\mathcal H_\delta
=
\operatorname{span}
\left\{
e^{-i\delta\alpha_n u}
\right\}_{n=0}^{M-1}
\subset L^2([-1,1]).
\]

\begin{remark}
For a linear array, the variable $u=\cos\theta$ represents the directional cosine relative to the array axis, so the interval
\([-1,1]\) corresponds to all observation directions. \qed
\end{remark}

\begin{remark}\label{r:spacing}
The element spacing is $x_{n+1}-x_n=d(\alpha_{n+1}-\alpha_n)$. For a uniform linear array, $\alpha_n=n$, so the element spacing $x_{n+1}-x_n=d(n+1)-dn=d$, and for half wavelength separation $d=\frac{\lambda}{2}$, we have $\delta=\pi$. \qed
\end{remark}

\begin{remark}
In antenna-array terminology, each array element contributes one steering function $e^{-i\delta\alpha_n u}$, and the vector
\[
{\bf a}(u)
=
\bigl(
e^{-i\delta\alpha_0 u},
\cdots,
e^{-i\delta\alpha_{M-1} u}
\bigr)^T,
\qquad u\in[-1,1],
\]
is the array steering vector, and the parameterized set
\[
\{{\bf a}(u):u\in[-1,1]\}\subset\mathbb C^M
\]
is commonly referred to as the array steering manifold. The corresponding
array factors are the functions
\begin{equation}\label{eq:af}
f(u)={\bf w}^*{\bf a}(u),
\qquad {\bf w}\in\mathbb C^M,   
\end{equation}
where ${\bf w}\in\mathbb C^M$ consists of given excitation weights, and the superscript $\phantom{ }^*$ denotes the complex conjugate transpose. Thus the set of all possible array factors forms the finite-dimensional exponential subspace $\mathcal H_\delta$. \qed
\end{remark}

\subsection{Array Gain}
The array factor (\ref{eq:af}) is dependent on the spacing $\delta$
\[
f(u)
=
\sum_{n=0}^{M-1}
w_n e^{-i\delta \alpha_n u}\in \mathcal H_\delta,
\quad
u=\cos\theta\in[-1,1].
\]

The maximal array gain in direction \(u=x\) is the extremal quantity
\begin{equation}\label{eq:ag}
\Gamma_\delta(u)
=
\sup_{f\in\mathcal H_\delta\setminus\{0\}}
\frac{|f(u)|^2}{\|f\|_{L^2}^2}.
\end{equation}
where $|f(u)|^2$ is directional radiation intensity, and $\|f\|_{L^2}^2$ is the total radiated power.

\begin{remark}
For the maximal array gain (\ref{eq:ag}), a standard argument with 
$f_1=\frac{f}{\|f\|_{L^2}}$ for any $f\in H_\delta\setminus\{0\}$ shows 
\[
\Gamma_\delta(u)
=
\sup_{\substack{f\in H_\delta\\ \|f\|_{L^2}=1}}
|f(u)|^2. \tag*{\qedsymbol}
\]
\end{remark}

\section{RKHS Interpretation of Array Gain}\label{s:ag}
The following proposition shows that the maximal array gain in any direction $u\in [-1,1]$ can be evaluated as the reproducing kernel at diagonal. 

\begin{proposition}\label{p1}
Let $\mathcal H$ be a reproducing-kernel Hilbert space with kernel $K$.
Then
\[
K(x,x)
=
\sup_{g\in\mathcal H\setminus\{0\}}
\frac{|g(x)|^2}{\|g\|^2}.
\]
\end{proposition}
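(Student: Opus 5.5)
The plan is the standard Cauchy--Schwarz argument for the reproducing property, followed by exhibiting an extremizer. Write $K_x=K(\cdot,x)\in\mathcal H$ for the kernel section at $x$. By the reproducing property, $g(x)=\langle g,K_x\rangle$ for every $g\in\mathcal H$. Applying this to $g=K_x$ itself gives
\[
\|K_x\|^2=\langle K_x,K_x\rangle=K_x(x)=K(x,x),
\]
so in particular $K(x,x)\ge 0$.

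The first step is the upper bound. For any $g\in\mathcal H\setminus\{0\}$, Cauchy--Schwarz gives
\[
|g(x)|^2=|\langle g,K_x\rangle|^2\le \|g\|^2\,\|K_x\|^2=\|g\|^2K(x,x).
\]
Dividing by $\|g\|^2$ shows that the supremum is at most $K(x,x)$.

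The second step is attainment, where we split into two cases.
\begin{itemize}
\item If $K(x,x)>0$, then $K_x\neq 0$, and we take $g=K_x$. This gives $|g(x)|^2/\|g\|^2=K(x,x)^2/K(x,x)=K(x,x)$, so the supremum is a maximum achieved by the kernel section.
\item If $K(x,x)=0$, then $\|K_x\|=0$, so $K_x=0$. Hence $g(x)=\langle g,K_x\rangle=0$ for every $g$, and the supremum is $0=K(x,x)$. This presumes $\mathcal H\neq\{0\}$ so that the supremum is over a nonempty set; for $\mathcal H_\delta$ with $M\ge1$ this holds trivially.
\end{itemize}

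There is no real obstacle here. The only point needing care is this degenerate case $K_x=0$, together with the convention that the supremum is over nonzero $g$. For the application to $\mathcal H_\delta$, I would also remark that $\mathcal H_\delta$ is finite-dimensional, so point evaluation is a bounded functional. By Riesz representation it is therefore an RKHS with kernel $K(u,v)=\sum_j e_j(u)\overline{e_j(v)}$ for any orthonormal basis $\{e_j\}$. Combining this with the second remark of Section~\ref{s:pf} identifies $\Gamma_\delta(u)=K(u,u)$.
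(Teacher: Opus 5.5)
Your proof is correct and follows the same route as the paper: the reproducing property plus Cauchy--Schwarz gives the upper bound, and the kernel section $K(\cdot,x)$ is the extremizer. Your separate treatment of the degenerate case $K(x,x)=0$, where $K(\cdot,x)=0$ is not an admissible competitor in $\mathcal H\setminus\{0\}$, is a small refinement that the paper's proof skips.
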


\begin{proof}
By the reproducing property,
\[
g(x)
=
\langle g,K(\cdot,x)\rangle.
\]
Applying Cauchy--Schwarz inequality, we have
\[
|g(x)|^2
\le
\|g\|^2
\,
\|K(\cdot,x)\|^2.
\]

Since
\[
\|K(\cdot,x)\|^2
=
\langle K(\cdot,x),K(\cdot,x)\rangle
=
K(x,x),
\]
we obtain
\[
\frac{|g(x)|^2}{\|g\|^2}
\le
K(x,x).
\]

Equality is attained by choosing
\[
g=K(\cdot,x).
\]

Therefore,
\[
K(x,x)
=
\sup_{g\neq0}
\frac{|g(x)|^2}{\|g\|^2}. \tag*{\qedsymbol}
\]
\renewcommand{\qedsymbol}{} 
\end{proof}

Since \(\mathcal H_\delta\) is finite-dimensional, it is an RKHS. 
Let $K_\delta$ be the reproducing kernel of $\mathcal H_\delta$. By Proposition \ref{p1} we have
\[
\Gamma_\delta(u)
=
K_\delta(u,u).
\]
Thus, the maximal array gain in the direction $u$ is the diagonal evaluation, i.e., at $(u, u)$, of the reproducing kernel.

In the following, we shall examine the limiting property of $K_\delta$ when the spacing parameter $\delta$ tends to zero.

\section{Spectral Collision Limit}\label{s:scl}
By ``spectral collision" we mean the degeneration of distinct exponential frequencies into repeated frequencies as $\delta\to 0$, i.e., the exponential family that generate $\mathcal H_{\delta}$ undergoes a spectral collision:
\[
\lim_{\delta\to 0}e^{-i\delta\alpha_n u} = 1.
\]

Define a polynomial space
\[
\mathcal P_{M-1}
=
\operatorname{span}
\{1,t,\cdots,t^{M-1}\}\subset L^2([-1,1]).
\]

We have the following spectral collision theorem:

\begin{theorem}\label{t1}
Let $K_{M-1}(\cdot,\cdot)$ be the reproducing kernel of ${\mathcal P_{M-1}}$. Then 
\[
   \lim_{\delta\to 0}
K_\delta (x, y)
= K_{M-1}(x,y)
\]
uniformly for $(x,y)\in [-1,1]\times [-1,1]$.
\end{theorem}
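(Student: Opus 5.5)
The plan is to replace the exponential basis of $\mathcal H_\delta$ by a rescaled, re-combined basis that converges uniformly to the monomials $1,u,\dots,u^{M-1}$. Then I will write both reproducing kernels through Gram matrices and pass to the limit. Throughout I assume, as is implicit in the statement, that the normalized coordinates $\alpha_0,\dots,\alpha_{M-1}$ are pairwise distinct. Otherwise $\dim\mathcal H_\delta<M$ and the limit cannot be all of $\mathcal P_{M-1}$.

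\emph{Step 1: a confluent basis.} Let $V=(\alpha_n^k)_{0\le k,n\le M-1}$ be the Vandermonde matrix, which is invertible since the $\alpha_n$ are distinct. Let $\beta_{j,k}=\sum_n (V^{-1})_{nk}\alpha_n^j$, so that $\beta_{j,k}=\delta_{jk}$ for $j\le M-1$. For $\delta>0$ and $0\le k\le M-1$ define
\[
g_k^\delta(u)=\frac{k!}{(-i\delta)^k}\sum_{n=0}^{M-1}(V^{-1})_{nk}\,e^{-i\delta\alpha_n u}\in\mathcal H_\delta .
\]
Expanding each exponential in its Taylor series gives
\[
g_k^\delta(u)=\sum_{j\ge0}\frac{k!}{j!}(-i\delta)^{j-k}\beta_{j,k}u^j .
\]
By the Vandermonde identity all terms with $j\le M-1$, $j\neq k$, cancel. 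Hence $g_k^\delta(u)=u^k+R_k^\delta(u)$, where $R_k^\delta$ collects only the terms with $j\ge M>k$. With $A=\max_n|\alpha_n|$ one has $|\beta_{j,k}|\le C A^j$, where $C$ is a bound on the entries of $V^{-1}$ times $M$. Therefore
\[
\sup_{u\in[-1,1]}|R_k^\delta(u)|\le C\,k!\sum_{j\ge M}\frac{A^j\delta^{j-k}}{j!}=O(\delta)\qquad(\delta\to0).
\]
Because the change-of-basis matrix is invertible for each $\delta>0$, $\{g_k^\delta\}_{k=0}^{M-1}$ is again a basis of $\mathcal H_\delta$.

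\emph{Step 2: kernels via Gram matrices.} For any finite-dimensional subspace of $L^2([-1,1])$ with basis $\{h_k\}$ and Gram matrix $G_{jk}=\langle h_k,h_j\rangle$, the reproducing kernel is
\[
K(x,y)=\sum_{j,k}h_j(x)\,(G^{-1})_{jk}\,\overline{h_k(y)},
\]
with the index convention on $G^{-1}$ checked against the reproducing property. This formula is independent of the chosen basis. Apply it with $h_k=g_k^\delta$ to get $K_\delta$, with Gram matrix $G_\delta$. Apply it with $h_k=u^k$ to get $K_{M-1}$, with Gram matrix $G_0$, the Hilbert-type matrix $\int_{-1}^1u^{j+k}\,du$. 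The matrix $G_0$ is positive definite because the monomials are linearly independent in $L^2([-1,1])$.

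\emph{Step 3: passage to the limit.} Uniform convergence $g_k^\delta\to u^k$ on $[-1,1]$ implies $L^2$ convergence, so $G_\delta\to G_0$ entrywise. Matrix inversion is continuous at the invertible matrix $G_0$, so $G_\delta^{-1}\to G_0^{-1}$. Each term $g_j^\delta(x)\,\overline{g_k^\delta(y)}$ converges uniformly on $[-1,1]^2$ to $x^j y^k$, since these are products of uniformly bounded, uniformly convergent factors. Combining these finitely many convergences gives $K_\delta\to K_{M-1}$ uniformly on $[-1,1]\times[-1,1]$.

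The main obstacle is Step 1. The naive exponential basis degenerates: its Gram matrix tends to the rank-one all-ones matrix, so one cannot take limits directly. The $\delta^{-k}$-rescaled Vandermonde recombination is precisely what removes the collision. The point needing care is the uniform tail estimate showing that the $\delta^{-k}$ prefactor is beaten by $\delta^{j}$ with $j\ge M$. Once this estimate is in hand, the rest is continuity of finite-dimensional linear algebra.
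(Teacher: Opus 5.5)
Your proof is correct and follows the paper's argument essentially step for step: the paper also inverts the truncated expansion $e^{-i\delta\boldsymbol\alpha t}=VD(\delta)\mathbf t+O(\delta^M)$ to get the basis $D^{-1}(\delta)V^{-1}e^{-i\delta\boldsymbol\alpha t}$ of $\mathcal H_\delta$, which converges uniformly to the monomials, and then passes to the limit in the Gram-matrix formula for the kernel using continuity of matrix inversion. Your explicit tail bound and your index convention $G_{jk}=\langle h_k,h_j\rangle$ are slightly more careful than the paper's version, since the paper pairs $G^\delta_{m,n}=\langle u^\delta_m,u^\delta_n\rangle$ with $(G^\delta)^{-1}_{m,n}$, which is off by a transpose when $G^\delta$ is complex (harmlessly so in the limit, where $G$ is real).
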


\begin{proof}
Using the Taylor expansion,
\[
e^{-i\delta\alpha_n t}
=
\sum_{k=0}^{M-1}
\frac{(-i\delta\alpha_n t)^k}{k!}
+
O(\delta^M).
\]
Writing it in matrix form we have
\begin{equation}\label{eq1}
e^{-i\delta\boldsymbol\alpha t}
=
VD(\delta){\bf t}
+
\underbar{\bf O}(\delta^M),
\end{equation}
where
\begin{align*}
    \boldsymbol\alpha&=\left(\alpha_0, \cdots, \alpha_{M-1}\right)^T,\\
    V&\in {\mathbb C}^{M\times M} \mbox{ with } V_{n,k}=\alpha_n^k,\quad n, k = 0, 1, \cdots, M-1,\\
D(\delta)&=\operatorname{diag}
\left[
\frac{(-i\delta)^0}{0!}, \frac{-i\delta}{1!},\cdots,\frac{(-i\delta)^{M-1}}{(M-1)!}
\right]\\
&=\operatorname{diag}
\left[ 1, \frac{-i\delta}{1!}\cdots,\frac{(-i\delta)^{M-1}}{(M-1)!}
\right], \\
{\bf t}&=\left(t^0, t, \cdots, t^{M-1}\right)^T=\left(1, t, \cdots, t^{M-1}\right)^T
\end{align*}
and $\underbar{\bf O}(\delta^M)$ is an $M\times 1$ vector with $O(\delta^M)$ entries on $[-1,1]$.

Since the collision parameters $\alpha_n$ are distinct, the Vandermonde matrix $V$ is invertible, allowing recovery of the monomial jet basis from the colliding exponential family. Since $D(\delta)$ is invertible for $\delta\ne 0$, it follows that $V D(\delta)$ is invertible. From (\ref{eq1}) we have
\begin{align*}
{\bf t}&=
D^{-1}(\delta)V^{-1}e^{-i\delta\boldsymbol\alpha t}
+
D^{-1}(\delta)V^{-1}\underbar{\bf O}(\delta^M) \\
&=D^{-1}(\delta)V^{-1}e^{-i\delta\boldsymbol\alpha t}
+
\underbar{\bf O}(\delta). 
\end{align*}

Thus each monomial $t^k$ can be approximated arbitrarily well by vectors in $\mathcal H_\delta$.

Let
\[
{\bf u}^\delta=\left(u_0^\delta,\dots,u_{M-1}^\delta\right)^T=D^{-1}(\delta)V^{-1}e^{-i\delta\boldsymbol\alpha t}.
\]
Then 
\[
u_k^\delta\in \mathcal H_\delta, \quad k=0,\cdots, M-1 
\]
such that
\[
\lim_{\delta\to 0} u_k^\delta = t^k
\]
uniformly for $t\in [-1,1]$.

Define the Gram matrix
\[
G^\delta_{m,n}
=
\langle u_m^\delta,u_n^\delta\rangle.
\]

Since $u_k^\delta\to t^k$, we have
\[
G^\delta\to G,
\]
where
\[
G_{m,n}
=
\langle t^m,t^n\rangle.
\]

Because $\{t^k:k=0,\cdots, M-1\}$ is linearly independent, $G$ is invertible. Hence
\[
(G^\delta)^{-1}\to G^{-1}.
\]

For a finite-dimensional RKHS, the reproducing kernel admits
the representation
\[
K_\delta(x,y)
=
\sum_{m,n=0}^{M-1}
u^\delta_m(x)\,
(G^\delta)^{-1}_{m,n}\,
\overline{u^\delta_n(y)}.
\]
Similarly,
\[
K_{M-1}(x,y)
=
\sum_{m,n=0}^{M-1}
x^m G^{-1}_{m,n} y^n.
\]

Since
\[
u^\delta_k(x)\to x^k
\]
uniformly in $x$, and
\[
(G^\delta)^{-1}\to G^{-1},
\]
each term in the finite sum converges uniformly on
$[-1,1]\times [-1,1]$. Therefore,
\[
K_\delta(x,y)\to K_{M-1}(x,y)
\]
uniformly on $[-1,1]\times [-1,1]$. This completes the proof.
\end{proof}

\begin{remark}
The collision limit above is analogous to the
confluent Vandermonde degenerations arising in Hermite
interpolation theory. Geometrically, the family
\[
\rho \mapsto e^{-i\rho t}
\]
defines an analytic curve in the Hilbert space $L^2([-1,1])$. 
As the spectral points $\rho_n(\delta)=\delta\alpha_n$ coalesce at $\rho=0$, the exponential family loses linear independence at zeroth order, and higher-order jet directions emerge through derivatives:
\[
\partial_\rho^k e^{-i\rho t}\big|_{\rho=0}
=
(-it)^k,
\]
that is, the secant space generated by the exponentials collapses to the osculating jet space generated by the derivatives $(-it)^k$. 
The limiting collision space is precisely the polynomial space $\mathcal P_{M-1}$. \qed
\end{remark}

In the next section we shall present the superdirectivity as a spectral collision limit as $\delta\to 0$ and provide an explanation of superdirectivity through reproducing-kernel geometry.

\section{Superdirectivity as RKHS Boundary Concentration}\label{s:super}
By the spectral-collision theorem of Section \ref{s:scl}, the
scaled exponential family degenerates to the polynomial
jet space $\mathcal P_{M-1}$ as $\delta\to 0$.

In this section, we shall make the connection between the maximal array gain and the Christoffel–Darboux kernel associated with the polynomial space $\mathcal P_{M-1}$, and show that superdirectivity gain of $M^2$ is a phenomenon of its boundary concentration. The Christoffel–Darboux kernel $K_n(\cdot, \cdot)$ is the reproducing kernel for the finite-dimensional Hilbert space of polynomials up to a given degree $n$. 

\begin{theorem}\label{t2}
Let $K_{M-1}(\cdot,\cdot)$ be the Christoffel–Darboux kernel of $\mathcal P_{M-1}$.
Then
$$
\lim_{\delta\to 0}\Gamma_\delta (u)=K_{M-1}(u,u), \quad u\in [-1,1].
$$
In particular, let $\Gamma(u)=K_{M-1}(u,u)$, then
\[
\Gamma(\pm1)=M^2,
\]
and for interior points $|u|<1$,
\[
\Gamma(u)=O(M).
\]
\end{theorem}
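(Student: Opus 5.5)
The plan has three steps: get the limit from the two earlier results, identify the limiting kernel in the Legendre basis, and then read off the endpoint value and the interior bound from classical facts about Legendre polynomials.

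\emph{The limit.} By Proposition~\ref{p1}, $\Gamma_\delta(u)=K_\delta(u,u)$ for every $\delta>0$. Theorem~\ref{t1} gives $K_\delta\to K_{M-1}$ uniformly on $[-1,1]\times[-1,1]$, so in particular on the diagonal. Hence $\lim_{\delta\to0}\Gamma_\delta(u)=K_{M-1}(u,u)$ for every $u\in[-1,1]$, uniformly in $u$. No further work is needed for this part.

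\emph{The Christoffel--Darboux form.} Let $P_k$ be the Legendre polynomials, normalized by $P_k(1)=1$, which satisfy $\int_{-1}^1 P_k^2=2/(2k+1)$. Then $p_k=\sqrt{(2k+1)/2}\,P_k$ for $k=0,\dots,M-1$ is an orthonormal basis of $\mathcal P_{M-1}$ in $L^2([-1,1])$. For any orthonormal basis of a finite-dimensional RKHS the kernel is $\sum_k p_k(x)\overline{p_k(y)}$; this is equivalent to the Gram-matrix representation used in Theorem~\ref{t1}. It follows that
\[
\Gamma(u)=K_{M-1}(u,u)=\sum_{k=0}^{M-1}\frac{2k+1}{2}\,P_k(u)^2 .
\]

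\emph{The endpoint value.} Since $P_k(\pm1)^2=1$, we get $\Gamma(\pm1)=\tfrac12\sum_{k=0}^{M-1}(2k+1)=M^2/2$. This is a factor $\tfrac12$ off the stated $M^2$. The discrepancy comes from normalization: the stated value corresponds to measuring power against the normalized measure $du/2$, i.e.\ directivity relative to isotropic, which multiplies the kernel by $2$. My plan is to make this convention explicit. I will either rescale the norm by $\tfrac12$, or state the result as $\Gamma(\pm1)\sim M^2$ up to the fixed constant $1/|[-1,1]|$. The quadratic law itself is exactly the identity $\sum_{k<M}(2k+1)=M^2$.

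\emph{The interior bound.} Here I would use the classical Bernstein inequality
\[
|P_k(u)|\le \sqrt{\frac{2}{\pi k}}\,(1-u^2)^{-1/4}, \qquad k\ge1,\ |u|<1 .
\]
It gives $\tfrac{2k+1}{2}P_k(u)^2\le \tfrac{2k+1}{\pi k}(1-u^2)^{-1/2}\le \tfrac{3}{\pi}(1-u^2)^{-1/2}$ for each $k\ge1$. Summing over $k$ yields
\[
\Gamma(u)\le \tfrac12+\frac{3(M-1)}{\pi\sqrt{1-u^2}}=O(M),
\]
with a constant uniform on compact subsets of $(-1,1)$ that blows up as $|u|\to1$. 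This blow-up is consistent with the boundary concentration. An alternative is the Christoffel--Darboux formula together with the standard asymptotic $\lambda_M(u)^{-1}\sim M/(\pi\sqrt{1-u^2})$.

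The only genuinely delicate point is the interior estimate, specifically citing Bernstein's inequality correctly and noting that the $O(M)$ is not uniform up to the endpoints. The normalization issue at $u=\pm1$ is bookkeeping, but it has to be resolved explicitly.
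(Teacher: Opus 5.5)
Your proposal is correct and follows the paper's route. The limit comes from Proposition~\ref{p1} and Theorem~\ref{t1}, then the Legendre orthonormal expansion of $K_{M-1}$, $P_k(\pm1)^2=1$ at the endpoints, and a classical Legendre estimate in the interior; the paper uses Szeg\H{o}'s asymptotic formula (Theorem 8.21.2) there, whereas your Bernstein inequality gives an explicit bound and makes the non-uniformity near $\pm1$ visible. The factor-$\tfrac12$ discrepancy you flag is real: the paper obtains $M^2$ by switching, inside this proof, to the normalized inner product $\tfrac12\int_{-1}^1 p\bar q\,dt$, while the ambient space and Theorem~\ref{t1} use $\int_{-1}^1$, so fixing that convention explicitly, applied to $K_\delta$ as well (which just rescales both kernels by $2$), is the right resolution.
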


\begin{proof}
By (\ref{eq:ag}) and Proposition \ref{p1},
\[
\Gamma_\delta(x)=K_\delta(x,x),
\]
where $K_\delta$ is the reproducing kernel of $H_\delta$.

By Theorem \ref{t1}
\[
K_\delta(x,y)\to K_{M-1}(x,y).
\]

Hence
\[
\Gamma_\delta(x)\to K_{M-1}(x,x).
\]

Define the inner product
\begin{equation}\label{eq:half-ip}
\langle p,q\rangle
=
\frac12
\int_{-1}^{1}
p(t)\overline{q(t)}\,dt.
\end{equation}
Here the interval length is 2 and the measure is normalized appropriately. 

Let
\[
\phi_k(t)
=
\sqrt{2k+1}\,P_k(t),
\qquad k=0,\cdots,M-1,
\]
where $P_k$ denotes the Legendre polynomial of degree $k$.

Since
\[
\frac12
\int_{-1}^{1}
P_k(t)P_j(t)\,dt
=
\frac{\delta_{k,j}}{2k+1},
\]
the functions $\phi_k$ form an orthonormal basis of
$\mathcal P_{M-1}$.

The reproducing kernel is therefore
\begin{equation}\label{eq:KM}
K_{M-1}(s,t)
=
\sum_{k=0}^{M-1}
\phi_k(s)\phi_k(t).    
\end{equation}

Since
$P_k(-1)=(-1)^k$, 
we obtain
\begin{align*}
K_{M-1}(\pm 1,\pm 1)
&=
\sum_{k=0}^{M-1}
(2k+1)P_k^2(-1) \\
&=
\sum_{k=0}^{M-1}
(2k+1)=M^2.
\end{align*}

Furthermore, from (\ref{eq:KM}), we have
\begin{equation*}
K_{M-1}(x,x)
=
\sum_{n=0}^{M-1}
(2n+1)\,P^2_n(x),   
\end{equation*}
where \(P_n\) denotes the Legendre polynomial of degree \(n\). For interior
points \(|x|<1\), the classical asymptotic formula \cite[Theorem 8.21.2]{Szego1975}
\begin{align*}
P_n(\cos\theta)
&=
\sqrt{\frac{2}{\pi n\sin\theta}}
\cos\!\left(
(n+\tfrac12)\theta-\tfrac{\pi}{4}
\right)
+O(n^{-3/2}), \\
&\qquad 0<\theta<\pi,   
\end{align*}
implies
\[
P_n(x)^2 = O(n^{-1}),
\qquad |x|<1.
\]
Consequently,
\[
K_{M-1}(x,x)=O(M),
\qquad |x|<1.
\]
The proof is complete.
\end{proof}

\begin{remark}
Theorem \ref{t2} shows that superdirectivity is fundamentally a boundary amplification phenomenon of reproducing kernels. While the  representation of the gain as a diagonal kernel evaluation is independent of the orthonormal basis chosen to expand the kernel, the specific asymptotic growth rate $M^2$ at the endpoints is a consequence of the underlying flat $L^2([-1,1])$ geometry, not a universal property of all RKHSs. Replacing the Lebesgue measure with a different spectral measure (e.g., a Jacobi weight) would yield a different gain scaling, as discussed in the next section. \qed
\end{remark}

\section{Geometry of Spectral Concentration}\label{s:gsc}
Having established that $\lim_{\delta\to0}\Gamma_\delta(u) = K_{M-1}(u,u)$ and that the limiting kernel yields $M^2$ at the endpoints and $O(M)$ in the interior, we now examine the geometric mechanism underlying this boundary concentration. The Christoffel function provides a variational interpretation linking the array gain to a minimal-energy problem. The trace identity and Hilbert–Schmidt norm reveal how the fixed total gain budget forces energy redistribution from the interior to the hard edge. Finally, we show that, mathematically, the quadratic scaling is not universal but rather a consequence of the flat Lebesgue measure in $L^2([-1,1])$ geometry.

\subsection{Christoffel Function Interpretation}\label{ss:cfi}
The polynomial space $\mathcal P_{M-1}$ links the study of superdirectivity to the Christoffel function with its extensive literature. For the polynomial space $\mathcal{P}_{M-1}$ equipped with the inner product (\ref{eq:half-ip}), the Christoffel function $\lambda_{M}(u)$ is defined as the reciprocal of the diagonal reproducing kernel \cite{Totik2000}:
\[
\lambda_{M}(u) = \frac{1}{K_{M-1}(u,u)}.
\]

Since
\[
\Gamma(u) = K_{M-1}(u,u) = \frac{1}{\lambda_{M}(u)},
\]
a small Christoffel function corresponds to a large array gain, 

For the Lebesgue measure on $[-1,1]$, the orthogonal polynomials are the Legendre polynomials. Classical asymptotics~\cite{Szego1975} yield
\[
\lambda_{M}(u) =
\begin{cases}
O(M^{-1}), & |u| < 1,\\[4pt]
O(M^{-2}), & u = \pm 1.
\end{cases}
\]
The faster decay of the Christoffel function at the endpoints—quadratic rather than linear—quantifies the superdirective gain, i.e., the $M^2$ endfire law is precisely the statement that the Christoffel function collapses as $1/M^2$ at the hard edge of the support, whereas in the interior the collapse is only $1/M$.

A fundamental result in orthogonal polynomial theory~\cite{Totik2000} establishes the equivalent variational characterization
\[
\lambda_{M}(u) = \inf_{\substack{p\in\mathcal{P}_{M-1}\\ p(u)=1}} \|p\|_{L^2}^2.
\]
Physically, $\lambda_{M}(u)$ represents the minimum total radiated power required to achieve a unit response at direction $u$. 

This variational perspective connects array theory to classical extremal problems in approximation theory. The minimizing polynomial achieving the infimum is uniquely given by
\[
p^\star(x) = \frac{K_{M-1}(u,x)}{K_{M-1}(u,u)},
\]
which satisfies $p^\star(u)=1$ and attains the minimum norm $\|p^\star\|_{L^2}^2 = \lambda_{M-}(u)$. The Christoffel function therefore encapsulates the fundamental relationship between directional response and total radiated energy, providing a mathematically rigorous foundation for the superdirective phenomenon.

\subsection{Kernel Concentration and Spectral Energy Redistribution}
The results of Theorem~\ref{t2} has been rephrased in the language of Christoffel functions, with faster endpoint decay quantifying the concentration. We now examine how these large pointwise values are compatible with the fixed total gain budget.

The convergence $K_\delta(x,y) \to K_{M-1}(x,y)$ established in Theorem~\ref{t1} is accompanied by a striking redistribution of the kernel's energy. 
This redistribution occurs under two invariant constraints that hold for any $M$-dimensional RKHS. First, the Hilbert-Schmidt norm of the reproducing kernel satisfies
\[
\iint|K(x,y)|^2 \, d\mu(x) \, d\mu(y) = M.
\]
Second, the trace identity 
$$
\int K(x,x) d\mu(x) = \sum_k \int |\phi_k(x)|^2 d\mu(x) = M.
$$ 
Both hold with respect to the same measure $\mu$ that defines the inner product, and can be verified directly with
$$
K(x,y)=\sum_{k=0}^{M-1}\phi_k(x)\overline{\phi_k(y)}
$$
for an $M$-dimensional RKHS with orthonormal basis $\{\phi_k\}$.

Since $K_\delta$ is a reproducing kernel,
\[
\Gamma_\delta(x)=K_\delta(x,x)
=
\|K_\delta(\cdot,x)\|_{L^2}^2
\ge0.
\]
Thus the diagonal kernel naturally represents a nonnegative
directional amplification density.

The emergence of pointwise values as large as $M^2$ at the endpoints therefore cannot arise from an increase in total kernel mass. Instead, it must correspond to a \emph{concentration} of the existing $L^2$ energy: the kernel withdraws from interior regions and accumulates at the boundary. The trace identity forces the average of $K_{M-1}(x,x)$ over $x$ to equal $M$, so pointwise values of order $M^2$ are possible only if the kernel is correspondingly smaller elsewhere. In the Legendre case, the Christoffel function asymptotics from Subsection \ref{ss:cfi} gives $K_{M-1}(x,x) = O(M)$ for all $|x|<1$ while $M^2$ is achieved only at the corner points, consistent with this redistribution.

This perspective clarifies why superdirectivity is fundamentally a boundary phenomenon. The total ``budget'' of array gain, integrated over all directions, is fixed by the number of array elements:
\[
\int_{-1}^1 \Gamma(u) \, du = \int_{-1}^1 K_{M-1}(u,u) \, du = M.
\]
Achieving gain exceeding $M$ at any direction requires sacrificing gain elsewhere. The spectral collision forces this sacrifice onto the interior, amplifying the endpoints at the expense of broadside directions. The quadratic $M^2$ law is thus not a creation of gain from nothing, but rather a severe redistribution of a fixed total gain budget.

Figs.~\ref{fig1} and~\ref{fig2} illustrate this evolution. For $\delta$ large (coarse spacing), the kernel $|K_\delta(x,y)|$ remains relatively flat across the entire domain $[-1,1]\times [-1,1]$. As $\delta$ decreases, the kernel develops sharp peaks near the corners $(x,y)=(\pm1,\pm1)$, indicating the transfer of spectral weight to the endpoints. Note that the ranges of the color bar are different in Fig. \ref{fig1}, with a much larger range for $\delta=0.1\pi$. Fig. \ref{fig2} shows the evolution of the concentration on the diagonal $K_\delta (u,u)$. Half wavelength spacing corresponds to $\delta=\pi$ (see Remark \ref{r:spacing}).

\begin{figure}[!t]
\centering
\includegraphics[width=4.5in]{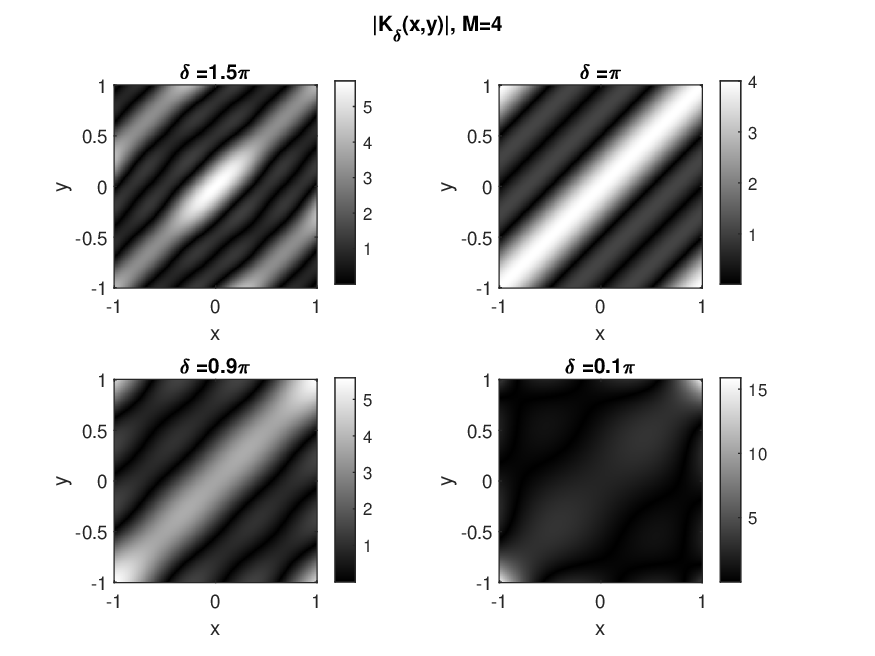}
\caption{Heatmap of $|K_\delta(x,y)|$.}
\label{fig1}
\end{figure}

\begin{figure}[!t]
\centering
\includegraphics[width=4.5in]{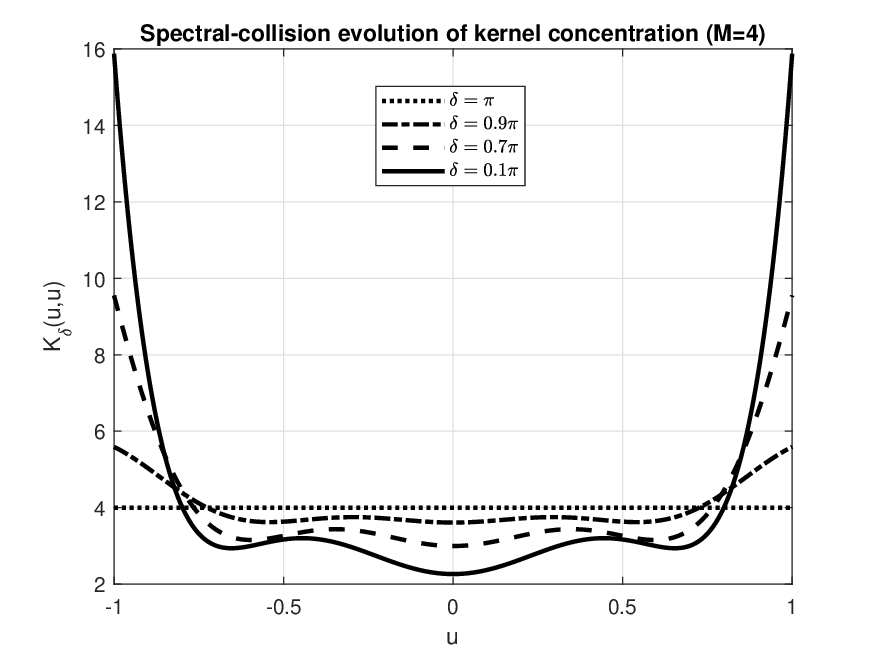}
\caption{Collision evolution of concentration on the diagonal $K_\delta(u,u)$.}
\label{fig2}
\end{figure}

\subsection{Other Spectral Geometries}

The preceding analysis assumes the flat $L^2([-1,1])$ inner product, corresponding to the Lebesgue measure $d\mu(t)=dt$. However, the spectral-collision framework developed in Sections~\ref{s:scl} and~\ref{s:super} accommodates any positive measure $\mu$ on $[-1,1]$ for which the exponentials $e^{-i\delta\alpha_n u}$ lie in $L^2(\mu)$. The collision limit remains the polynomial space $\mathcal{P}_{M-1}$, but now equipped with the inner product
\[
\langle f,g\rangle_\mu = \int_{-1}^1 f(t)\overline{g(t)}\,d\mu(t).
\]
The resulting reproducing kernel $K_{M-1}^{(\mu)}$ determines the asymptotic gain scaling
\[
\Gamma^{(\mu)}(x) = \lim_{\delta\to0}\Gamma_\delta(x) = K_{M-1}^{(\mu)}(x,x).
\]

The classical $M^2$ law is therefore not mathematically universal.  Instead, the growth rate of $K_{M-1}^{(\mu)}(x,x)$ at the endpoints depends sensitively on the local behavior of $\mu$ near $x=\pm1$. For example, for the Jacobi family~\cite{Szego1975}
\[
d\mu^{(\alpha,\beta)}(t) = (1-t)^{\alpha}(1+t)^{\beta}\,dt,\qquad \alpha,\beta > -1,
\]
the orthonormal polynomials are the Jacobi polynomials $P_n^{(\alpha,\beta)}(t)$ normalized appropriately. Classical asymptotics~\cite{Szego1975} yield
\[
K_{M-1}^{(\mu)}(1,1) \sim C(\alpha) \cdot M^{2\alpha+2},
\]
while for interior points $|x|<1$ one retains $K_{M-1}^{(\mu)}(x,x) = O(M)$, provided $\mu$ has a positive continuous density on $(-1,1)$.

Table~1 summarizes the endpoint scaling for representative Jacobi parameters.

\begin{table}[ht]
\centering
\caption{Endpoint gain scaling $K_{M-1}^{(\mu)}(1,1)$ for Jacobi measures $d\mu(t)=(1-t)^{\alpha}dt$ on $[-1,1]$.}
\begin{tabular}{ccl}
\hline
$\alpha$ & Measure behavior at $t=1$ & Gain scaling \\
\hline
$-1/2$ & Chebyshev (first kind), blows up as $(1-t)^{-1/2}$ & $O(M)$ \\
$0$ & Legendre (flat), constant density & $O(M^2)$ \\
$1$ & Vanishes linearly as $(1-t)$ & $O(M^4)$ \\
$2$ & Vanishes quadratically as $(1-t)^2$ & $O(M^6)$ \\
\hline
\end{tabular}
\label{tab:jacobi_scaling}
\end{table}

Thus, Christoffel function asymptotics for Jacobi weights yield endpoint concentration scaling $M^{2\alpha+2}$. For $\alpha > 0$, this would constitute a super concentration effect exceeding the classical $M^2$ superdirective law.

\section{Conclusion}\label{s:concl}
We have shown that array superdirectivity can be understood as a concentration phenomenon induced by spectral collision. For every array spacing, the maximum achievable gain admits an exact characterization as the diagonal of the reproducing kernel associated with the corresponding exponential space. As the element spacing tends to zero, these exponential spaces undergo a confluent degeneration to a polynomial jet space. This spectral-collision limit transforms the reproducing kernel into the Christoffel–Darboux kernel of the limiting polynomial geometry, thereby connecting superdirective gain to Christoffel functions and orthogonal polynomial asymptotics.

For the classical flat \(L^2([-1,1])\) geometry, the limiting kernel is the Christoffel--Darboux kernel of the Legendre polynomials. The well-known quadratic \(M^2\) endfire law follows directly from its endpoint asymptotics and may therefore be interpreted as a manifestation of boundary concentration. From this perspective, superdirectivity is not the creation of additional gain but the redistribution of a fixed gain budget through the concentration of kernel mass near the boundary of the spectral domain.

The reproducing-kernel formulation also clarifies the role of spectral geometry. The classical \(M^2\) scaling is not a universal consequence of dense spacing alone but a consequence of the particular flat geometry associated with the Lebesgue measure on \([-1,1]\). Alternative spectral measures produce different Christoffel asymptotics and therefore different amplification laws. The spectral-collision viewpoint thus places superdirectivity within a broader class of concentration phenomena arising from singular limits of finite-dimensional function spaces and reveals a direct connection between array gain, reproducing kernels, Christoffel functions, and orthogonal polynomial asymptotics.


%




\ifCLASSOPTIONcaptionsoff
  \newpage
\fi



%
\bibliographystyle{IEEEtran} 
\bibliography{references}

%





\end{document}